\newcommand{\vect}[1]{{\mathbf{#1}}}
\newcommand\suchthat{\mid}
\def\booknames#1/#2/#3/#4{#1:#2:#3:#4}
\newcounter{Problemset}
\newtheorem{lem}{Lemma}
\title{Fast Heuristics for Power Allocation in Zero-Forcing OFDMA-SDMA  Systems with Minimum
Rate Constraints}
\begin{document}
\maketitle
%
%
%
%
%
\begin{abstract}
  We investigate in this paper the optimal power allocation in an OFDM-SDMA system
  when some users have minimum downlink transmission rate requirements. We
  first solve the unconstrained power allocation problem for which
  we propose a fast zero-finding
  technique that is guaranteed to find an optimal solution, and an
  approximate solution that has lower complexity but is not guaranteed to
  converge.
For the more complex minimum rate constrained problem, we propose two approximate
  algorithms. One is an iterative technique that finds an optimal
  solution on the rate boundaries so that the solution is feasible,
  but not necessarily optimal. The other is not iterative but cannot
  guarantee a feasible solution.
  We present numerical results showing that the computation time for
  the iterative heuristic is one order of magnitude faster than finding the exact solution with a numerical solver, and the non-iterative technique is an additional order of magnitude faster than the iterative heuristic. We also show that
  in most cases, the amount of infeasibility with the non-iterative technique is
  small enough that it could probably be used in practice.
\end{abstract}
\section{Introduction}
Due to the increasing bandwidth requirements of wireless users,
new sophisticated spectrum management techniques are required.
An approach that has  recently attracted a lot of interest to increase
throughput  is to
exploit the spatial, frequency  and multi-user diversity
dimensions by combining the orthogonal frequency division
multiplexing access (OFDMA) and spatial division multiple access
(SDMA) techniques.

OFDMA provides multi-user frequency diversity
by dividing the available bandwidth into independent subchannels
and then using a channel-aware scheduler to grant access to the
users with the best conditions for each subchannel. Meanwhile, the
SDMA technique assigns the same frequency
subchannel to a group of users with compatible channel vectors,
thus increasing the system's spectral efficiency. Zero forcing
(ZF) is a practical SDMA beamforming i.e., precoding technique
where  the beamforming vectors are simply computed using the
pseudo-inverse matrix~\cite{wiesel07}.   This cancels inter-user
interference and allows simultaneous parallel transmission
to the selected users. The ZF precoding technique
is not optimal, but its implementation in practical systems is
much simpler than other precoding techniques and is quasi-optimal
at higher signal to noise ratio (SNR) or when users have
quasi-orthogonal channel vectors~\cite{yoo06}.

The constrained
resource allocation (RA) problem for a system using the ZF
OFDMA-SDMA technique is made up of two parts.
First, select   optimally   which users to
assign to each subcarrier, and choose their transmit power allocation to
maximize the sum rate subject to a total power budget and minimum
rate requirements for a subset of users.

The first part of the RA problem --- user selection ---  can
be solved using heuristic methods that scan the users' spatial
vectors and pick the best users for each subchannel.
The second part --- power allocation (PA) ---
can be transformed into a one-dimensional root finding problem when formulated
in the dual domain for the case of one total power constraint and
yields the well-known water-filling power allocation
scheme~\cite{wang-11}. Heuristics
are proposed in~\cite{palomar-03, ling12, jang-03},
to obtain the optimal water level efficiently for the case of
single input single output (SISO) wireless link. For the case of a  multiple input
single output (MISO) wireless link, such as for multi-antenna SDMA transmission, the combined heuristics for subchannel
user selection and power allocation give results close to the
optimum~\cite{dimic05,perea10}.

When the base station (BS) supports best effort (BE) and real-time
(RT) traffic simultaneously, the RA algorithm should guarantee a
certain minimum rate for the RT users, while the BE users with
good channel conditions should be assigned resources to increase
the sum rate.  In this case, the user selection should pick the
subchannels to insure that the RT users get their minimum rates.
Heuristic methods have been proposed
in~\cite{papoutsis11,perea-heur13} for user selection. Efficient
power allocation is also a central tool to improve the performance
of these heuristics~\cite{perea-heur13}.  After an initial user
selection, PA is performed and the user rates are computed. If these
rates are lower than the rate requirements for the RT users, a new
user selection is done and PA is performed again until a good
enough solution is found.

For a given user selection, the resulting PA problem is
convex due to the use of the pseudo-inverse technique. It  can be
solved optimally using a number of standard optimization
techniques~\cite{bertsekas03}.  This can take some time when
the number of subchannels is large, such as in LTE-Advanced systems~\cite{lte-rel10}. In
addition, the PA problem with fixed subchannel assignment is
solved many times by the subchannel assignment
heuristics~\cite{papoutsis11, perea-heur13} when looking for additional
subchannels  to satisfy the RT constraints. For this reasons, we need efficient heuristics that solves the PA
problem for fixed subchannel assignment much faster than the
standard algorithms and with an achieved rate not too different from
optimal solution.  The objective of this paper is the design and
evaluation of such a PA heuristic for ZF OFDMA-SDMA systems with RT users with minimum rate requirements.

Constant power distribution among users has been used as a simple
heuristic for PA. In a scenario where the minimum rate
constraints are lower than the rates achieved by the unconstrained
sum rate maximization solution, the performance obtained is close
to the optimal for BE traffic~\cite{jang03}. However, when the
minimum rates constraints are active, the method cannot always provide
feasible points. An adaptable PA scheme should decrease the power of the
BE users and give it to the RT users to support their minimum
rates. Obviously, this is not possible in a constant
power allocation scheme. This is partially solved in~\cite{anas04},
by assigning constant power to the BE users and performing
water-filling PA on the RT users. In this paper, we go several
steps beyond this and use multi-level water-filling PA over all
users and an efficient heuristic that operates in the dual
domain.

Because we use the pseudo-inverse approximation, the resulting PA
problem is convex. If there are no rate constraints, it can be
solved exactly and efficiently with the methods proposed
in~\cite{ling12,yu-boyd-04,jang-03}. The fastest is that
of~\cite{ling12}  by one  order of magnitude. Several heuristic
methods have been proposed to solve the power minimization problem
under rate constraints for SISO systems \cite{zhu09, kivanc03}.
But, to the best of our knowledge, fast methods to solve the
problem under minimum rate constraints have not been reported for
ZF OFDMA-SDMA systems.

One common method to solve convex problems is dual Lagrange
decomposition~\cite{boyd04}. We define dual
variables for the power and rate constraints. For problems
like ours,  we can get a closed-form expression of the dual
function and the users' powers. The dual function is
continuous but not differentiable and we could use the  subgradient algorithm
to get the optimal dual variables.  The  dual variable for the power constraint
indicates the water level and the ones for the rate constraints
the superimposed multi-levels~\cite{tao-08}. The
size of the problem
grows with the number of RT users, since each rate constraint is
assigned a dual variable, which can lead to large computation times. Instead
of solving the dual problem to
optimality, our method finds an approximate solution that
generates a feasible point to the problem which is much faster to
compute.

The main contribution of this paper is a
method to find the dual variables that satisfy the power and rate
constraints much faster than solving the dual problem optimally
using iterative methods. As our numerical results indicate, the
resulting dual variables generate a feasible point that is close
to the optimal solution and the computation times are several
orders of magnitude lower than for the optimal method.

In section~\ref{sec:prob-form}, we describe the system under
consideration and mathematically formulate the PA optimization
problem. Then we present in section~\ref{subsec:max_thr_heur} some
algorithms for the unconstrained maximum throughput power
allocation. Next, we show  in section~\ref{subsec:rat_con_heur}
how to compute fast approximations when rate constraints are
present. We then  numerically evaluate the proposed algorithm in
section~\ref{sec:results}, both in terms of accuracy and CPU time.
Finally, we present our conclusions in
section~\ref{sec:conclusions}.
\section{Problem formulation and optimal solution}
\label{sec:prob-form}
We consider the resource allocation problem for the downlink
transmission in a multi-carrier multi-user MISO system with a
single BS. There are $K$ users, some of which have RT traffic with
minimum rate requirements, while the others have non real-time
(nRT) traffic that can be served on a best-effort basis. The BS is
equipped with $M$ transmit antennas and each user has one receive
antenna. The system's available bandwidth $W$ is divided into $N$
subchannels whose coherence bandwidth is assumed larger than
$W/N$, thus each subchannel experiences flat fading. In the system
under consideration the BS transmits data in the downlink
direction to different users on each subchannel by performing
linear beamforming precoding. At each OFDM symbol, the BS changes
the beamforming vector for each user on each subchannel to
maximize a weighted sum rate.

To simplify the RA problem, we assume that the beamforming vectors
are chosen according to the ZF criteria, which is known to be
nearly optimal when the SNR is high~\cite{yoo06}. We assume that
we have already chosen a set of users $S_n$ out of the total
number of users $K$ for each subchannel $n$, such that $g_n \doteq
|S_n| \leq M$. Let us define the subcarrier channel matrix
$\vect{H}_n = [ \vect{h}_{ n,s_n(1)}^T \dots \vect{h}_{ n,
s_n(g_n) }^T ]^T$ where $\vect{h}_{n,k} \in \mathbb{C}^{1 \times M
}$ is the channel row $M$-vector between the BS and user $k$ on
subchannel $n$, and
\begin{align}
\label{eq:mat_gamma_def}
 \beta_{n,k} \doteq &
\begin{cases}
{\left[ (\vect{H}_n^{\dag})^{H} \vect{H}_n^{\dag} \right]}_{j,j}
& \text{if } k= s_n(j), \quad \forall j \in \{1,\ldots,g_n\}
\\
0,  & \text{otherwise.}
\end{cases}
\end{align}

Our objective
is to find a fast method to assign the transmit power to users.
First, let us define the problem parameters
\begin{description}
\item[$\check{P}$] Total power available;
\item[$c_k$] Scheduling weight of user $k$;
\item[$\beta_{n,k}$] Effective channel gain of user $k$ on subcarrier $n$;
\item[$\check{d}_k$] Minimum rate required by the real-time user $k$.
\end{description}
The decision variable are defined as
\begin{description}
\item[$p_{n,k}$] Power assigned to user $k$ on subchannel $n$.
\end{description}
We also define
\begin{description}
\item[$r_{n,k}$]  The transmission rate achieved by user $k$ on subchannel $n$ for one OFDM symbol.
\end{description}
We assume that capacity achieving channel coding is employed on each subcarrier such that
\begin{equation}
  r_{n,k} = \log_2 \left( 1 + p_{n,k} \right). \label{eq:defrate}
\end{equation}

The objective of the PA algorithm is to find the $p_{n,k}$ to maximize the weighted sum
rate of the users subject to the power and minimum rate
constraints. The users scheduling weights {$c_k$} and the minimum rate
constraints are determined by a higher layer scheduler. The
optimization problem is formulated as follows
\begin{align}
  \label{eq:sub-problem-q}
  \max_{ p_{n,k} }  \quad \sum_{n=1}^N \sum_{k=1}^K  c_k
  & \log_2( 1 + p_{n,k} )  \\
  \sum_{n=1}^N \sum_{k=1}^K  \beta_{n,k} \,\ p_{n,k}
  & \le   \check{P} \label{eq:subp-q-pow} \\
\sum_{n=1}^N \log_2( 1 + p_{n,k} ) & \ge  \check{d}_k
\quad k \in { \mathcal{D} } \label{eq:subp-q-rat} \\
p_{n,k} & = 0, \quad \forall n, k\notin S_n \\
p_{n,k} & \geq 0, \quad \forall n, k\in S_n.  \label{eq:subp-box}
\end{align}
Constraint (\ref{eq:subp-q-pow}) is the total power constraint and
constraints (\ref{eq:subp-q-rat}) are the RT users' minimum rate
constraints. Note that this
problem is convex
since it maximizes a concave function over  the convex set defined
by~(\ref{eq:subp-q-pow}--\ref{eq:subp-box}).
The solution  also has the following property.
\begin{lem}
\label{lem:heur-pow-contr-eq} The solution  of
problem~(\ref{eq:sub-problem-q}--\ref{eq:subp-box})  satisfies the
power constraint (\ref{eq:subp-q-pow}) with equality.
\end{lem}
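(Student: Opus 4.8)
The plan is to argue by contradiction, exploiting the fact that the objective is strictly increasing in each decision variable while the rate constraints~(\ref{eq:subp-q-rat}) are lower bounds. Suppose $\{p_{n,k}^{*}\}$ is an optimal solution of~(\ref{eq:sub-problem-q}--\ref{eq:subp-box}) for which the power constraint~(\ref{eq:subp-q-pow}) is slack, say $\sum_{n}\sum_{k}\beta_{n,k}\,p_{n,k}^{*}=\check{P}-\delta$ with $\delta>0$. I would then build an explicit feasible perturbation that strictly raises the objective, which contradicts optimality and forces $\delta=0$.

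Concretely, I would pick a pair $(n_0,k_0)$ with $k_0\in S_{n_0}$ and $\beta_{n_0,k_0}>0$, set $\epsilon:=\delta/\beta_{n_0,k_0}>0$, and define $\tilde p_{n,k}:=p_{n,k}^{*}$ for $(n,k)\neq(n_0,k_0)$ and $\tilde p_{n_0,k_0}:=p_{n_0,k_0}^{*}+\epsilon$. Feasibility is then checked term by term: the sign constraints~(\ref{eq:subp-box}) and the forced zeros for $k\notin S_n$ are untouched; by the choice of $\epsilon$ the left-hand side of~(\ref{eq:subp-q-pow}) becomes exactly $\check{P}$; and since $p\mapsto\log_2(1+p)$ is increasing, the sum $\sum_n\log_2(1+\tilde p_{n,k})$ is unchanged for $k\neq k_0$ and can only grow for $k=k_0$, so~(\ref{eq:subp-q-rat}) still holds. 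Finally, because $c_{k_0}>0$ and $\log_2$ is strictly increasing, the objective value at $\{\tilde p_{n,k}\}$ is strictly larger than at $\{p_{n,k}^{*}\}$, the desired contradiction.

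The step that needs care is justifying the existence of the pair $(n_0,k_0)$ with $\beta_{n_0,k_0}>0$ and, implicitly, of an optimal solution at all. If $\bigcup_n S_n=\emptyset$ the problem is degenerate and~(\ref{eq:subp-q-pow}) reads $0\le\check{P}$, so the claim must be read in that light; otherwise, I would note that $\beta_{n_0,k_0}$ is a diagonal entry of the positive semidefinite Gram matrix $(\vect{H}_{n_0}^{\dag})^{H}\vect{H}_{n_0}^{\dag}$ appearing in~(\ref{eq:mat_gamma_def}), and that under the ZF selection $g_{n_0}=|S_{n_0}|\le M$ the selected channel rows are linearly independent, so $\vect{H}_{n_0}^{\dag}$ has full column rank and its Gram matrix has strictly positive diagonal; hence $\beta_{n_0,k_0}>0$. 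This same strict positivity (of every $\beta_{n,k}$ with $k\in S_n$) also bounds the feasible set, so the continuous objective attains its maximum and an optimal $\{p_{n,k}^{*}\}$ indeed exists. With these observations the contradiction argument above goes through, and everything else is routine.
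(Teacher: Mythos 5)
Your proposal is correct and follows essentially the same route as the paper: both argue that a slack power constraint leaves room to increase some $p_{n,k}$, which preserves feasibility of the lower-bound rate constraints and strictly increases the objective, contradicting optimality. Your version is simply more explicit (a single-coordinate perturbation with $\epsilon=\delta/\beta_{n_0,k_0}$, plus the justification that $\beta_{n_0,k_0}>0$ and that an optimum exists), whereas the paper invokes an unspecified $\boldsymbol\Delta\geq 0$; both implicitly rely on $c_{k_0}>0$.
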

\begin{proof}
This is a straightforward consequence  of the fact that the
rate function~(\ref{eq:defrate}) is an increasing function of
$p_{n,k}$.
Suppose $\vect{p}^{(a)}$ is a feasible point
that
satisfies (\ref{eq:subp-q-pow}) with \emph{strict} inequality and satisfied (\ref{eq:subp-q-rat}).
Then there exists a
$\boldsymbol\Delta \geq 0$  such that both the power
constraint~(\ref{eq:subp-q-pow}) and the rate constraints~(\ref{eq:subp-q-rat})
are feasible for
$\vect{p}^{(b)}= \vect{p}^{(a)} + \boldsymbol\Delta  $. But in this
case,
the values of $r_{n,k}$  will also increase so that
$\vect{p}^{(a)}$ cannot be optimal.
\end{proof}
\noindent In all that follows, we will thus assume that~(\ref{eq:subp-q-pow}) is an
equality.

We can solve~(\ref{eq:sub-problem-q}--\ref{eq:subp-box}) in a
number of ways, for instance,  with a general-purpose nonlinear
programming solver to get  solution of the primal, or use a
Lagrangian decomposition method and solve the dual by a
subgradient technique. However, this is much too long for use in
real time applications and  we thus need fast approximations. The
ones we propose in this paper are based on a fixed-point
reformulation of the problem in the dual space.

Because the problem is convex, we can solve it optimally by computing
a solution to  the first-order optimality equations.
Let first
define the  Lagrange multipliers $\theta \ge 0 \in \mathbb{R}_+^K$ for the power constraint
(\ref{eq:subp-q-pow}) and  $\delta_k \ge 0 \in \mathbb{R}_+^K$ for
the rate constraints (\ref{eq:subp-q-rat}).  We then write the partial
Lagrange function
function
\begin{align}
\label{eq:sub-problem-lagr}
\iftoggle{onecolumnformat} {
 \mathcal{L}_2( \vect{p}, \theta,\boldsymbol{\delta}) =  - \theta \check{P}
   + \sum_{k=1}^K \delta_k \check{d}_k   + \sum_{n=1}^N \sum_{k=1}^K -( c_k + \delta_k )
   \log_2( 1 + p_{n,k} ) + \theta \beta_{n,k} p_{n,k} - \sum_{n,k}
   \mu_{n,k} p_{n,k}
} %
{
 \mathcal{L}_2( \vect{p}, \theta,
\boldsymbol{\delta}) &=  - \theta \check{P}
   + \sum_{k=1}^K \delta_k \check{d}_k   + \sum_{n=1}^N \sum_{k=1}^K -( c_k + \delta_k )
\\ \nonumber
& \log_2( 1 + p_{n,k} ) + \theta \beta_{n,k} p_{n,k} - \sum_{n,k}
   \mu_{n,k} p_{n,k}
}
\end{align}
where $\boldsymbol{\delta}$ is the vector of variables
$\delta_k$. For convenience we have defined
$\delta_k$ for $k \notin { \mathcal{D} }$ for which we set
$\check{d}_k=0$.
We get the first-order optimality conditions
\begin{equation}
  \frac{\partial \mathcal{L} }{ \partial p_{n,k} } = - \frac{ c_k +
    \delta_k}{1 + p_{n,k} } + \theta \beta_{n,k}  = 0  \label{eq:firstorder}
\end{equation}
which yield the optimal water-filling power allocation as a function of the
multipliers $\theta$ and $\boldsymbol{\delta}$
\begin{align}
\label{eq:optp-def1}
\overline{p}_{n,k} =& { \left[ \frac{ c_k+
\delta_k } { \theta \beta_{n,k} \ln2 } -1 \right] }^+, \quad
\forall n, \forall k: \beta_{n,k} \neq 0.
\end{align}
From this, we get  the optimal rate allocation
\begin{align}
\label{eq:rate-computation} r_{n,k} =& \log_2 \left( 1 + { \left[
\frac{ c_k+ \delta_k } { \theta \beta_{n,k} \ln2 } -1 \right] }^+
\right), \quad \forall n, \forall k: \beta_{n,k} \neq 0.
\end{align}
We also define the set of strictly positive allocations for user $k$
\begin{equation}
\mathcal{B}_k ( \theta, \delta_k) =  \left\{ n \vert
\overline{p}_{n,k} > 0 \right\}. \label{eq:defbctheta}
\end{equation}
From~(\ref{eq:optp-def1}), we see that the size $\sigma_k(\theta, \delta_k)$ of
this set increases with $\delta_k$ and decreases with $\theta$.
We have from~(\ref{eq:optp-def1}) an expression for the solution in
terms of the multipliers. The standard procedure is to
replace this  into the
constraints and solve for the multipliers. This generally yields
a set of nonlinear equations and in the case of inequality
constraints, it has the added complexity of choosing the set of
constraints that are tight at the solution.

We now reformulate the problem in terms of a system of fixed-point
equations. This will be used later to get fast approximations and it
also yields an algorithm for computing a feasible solution. We can do
this by replacing~(\ref{eq:optp-def1})
in~(\ref{eq:subp-q-pow}) and~(\ref{eq:subp-q-rat}) and get
\begin{align}
  \delta_k & \ge  \left[ \theta \ln2
  \left( 2^{\check{d}_k}
    \prod_{n \in \mathcal{B} (\theta, \delta_k) }
    \beta_{n,k} \right)^{ \sigma(\theta, \delta_k)^{-1} }  - c_k \right]^+
  \label{eq:delta1theta}\\
\theta  & = \frac{
\sum_{k=1}^K \sigma_k( \theta, \delta_k )  ( c_k + \delta_k ) }
{ \left( \check{P} + \sum_{k=1}^K \sum_{n\in\mathcal{B}_k(\theta, \delta_k ) }
  \beta_{n,k} \right)\ln 2 }
\label{eq:theta-maxthr-heur-comp} .
\end{align}
This is a set of fixed-point equations because of the presence of
$\mathcal{B}(\theta, \delta_k)$ in the right-hand side.
Any set of variables $\theta$, $\boldsymbol{\delta}$ that solve these
equations will yield an optimal solution via~(\ref{eq:optp-def1}).
Also note that if $\mathcal{B}$ is fixed, each $\delta$ is a linear
function of $\theta$, and $\theta$ is a linear function of
$\boldsymbol{\delta}$. This means that both $\theta$ and
$\boldsymbol{\delta}$ can be viewed as piecewise linear functions of
the other multipliers.

For now, we want
to solve problem~(\ref{eq:sub-problem-q}--\ref{eq:subp-box})
efficiently. The first approach is to take advantage of the fact that
we expect that the number of real-time users will generally be
small. This means that if we solve the power
allocation \emph{without} the rate constraints~(\ref{eq:subp-q-rat}), unless the real-time users happen to have particularly
bad channels,
there is a good chance that the constraints will be met
automatically, so that this solution is optimal. We study this problem in section~\ref{subsec:max_thr_heur}.

If this solution does not meet the rate constraints,
we incorporate them into the problem and solve
it approximately. We describe in
section~\ref{subsec:rat_con_heur} two heuristic methods for this latter problem. The first
one is iterative and yields a feasible solution where the rate
constraints are tight. The second one is
based on
adjusting the variables to make the solution feasible.
It requires no iterations and often finds a
feasible point that satisfies the power constraint with equality and the rate
constraints with inequality.
\section{Power   Allocation without Rate Constraints}
\label{subsec:max_thr_heur}
We  want in this section to compute quickly the optimal solution of
problem~(\ref{eq:sub-problem-q}--\ref{eq:subp-q-pow}) and~(\ref{eq:subp-box})  without the rate
constraints~(\ref{eq:subp-q-rat}). In this case,  we have $\delta_k =
0$ but because it will  be used in
section~\ref{subsec:rat_con_heur} when computing a feasible solution, we present the algorithms with an arbitrary, but fixed, value
of $\boldsymbol{\delta}$.
We compare two methods that yield an exact solution. The first one
finds a root of the power constraint and is guaranteed to converge in a fixed number
of iterations. The second method does not
have a guaranteed convergence but is faster that the root-finding method.
\subsection{Root of the Optimality Conditions}
\label{sec:solv-optim-cond}
This method computes the optimal value of $\theta$ by finding a zero of
the first-order optimality condition.
We replace~(\ref{eq:optp-def1})
in~(\ref{eq:subp-q-pow}) and solve for $\theta$ the nonlinear equation
\begin{equation}
\label{eq:heur-pow-const2} g_1(\theta) = \sum_{k=1}^K \sum_{n=1}^N \beta_{n,k}
\left[ \frac{ ( c_k + \delta_k ) } { \theta \beta_{n,k} \ln2 } -1
\right]^+ = \check{P}.
\end{equation}
The function $g_1(\theta)$ is
continuous but not differentiable at the \emph{corner} points
$\theta_{n,k}$   defined as
\begin{equation}
  \theta_{n,k} = \frac{ ( c_k + \delta_k ) }{\beta_{n,k} \ln2}.
\end{equation}
For a given value of $\theta$, $g_1$  is the sum of all hyperbolic
segments $\overline{p}_{n, k} ( \theta)  $ such that $\theta \le \theta_{n,k}  $. A typical
function is shown on Fig.~\ref{fig:solnonlin}.
%
\begin{figure}
  \centering
  \includegraphics[scale=0.7]{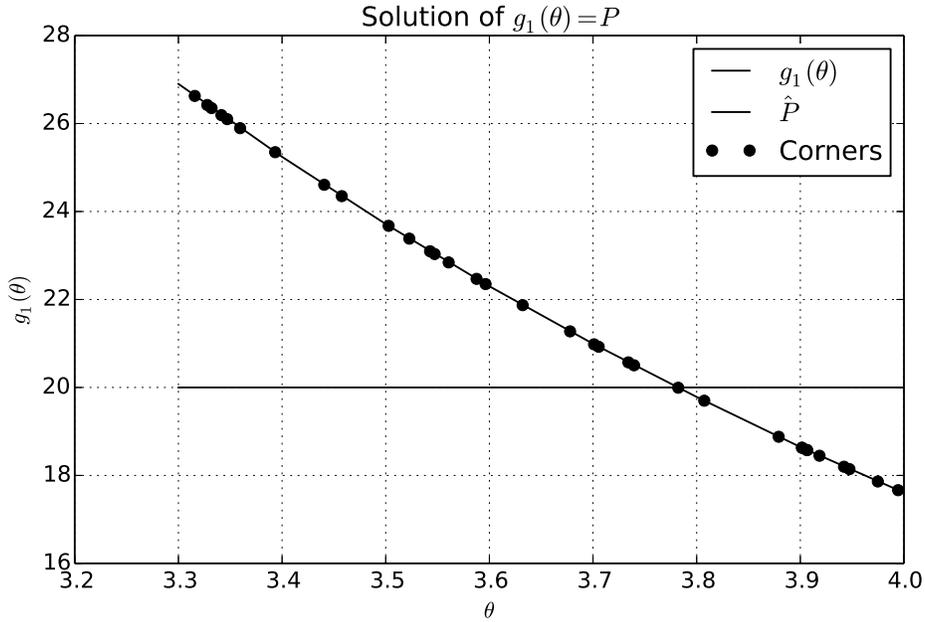}
  \caption{Solving the first-order equations}
  \label{fig:solnonlin}
\end{figure}
We could solve it
using a general-purpose zero-finding algorithm for
non-differentiable functions but we can take advantage of the
particular structure of the problem to design a more efficient
algorithm. This is based on the method presented in~\cite{palomar-03} that takes
advantage of the finite number of corner points.
First, we relabel the corner points in increasing values
$\theta_i, i=1, \ldots NK$. The sets $ \mathcal{B}_k(\theta,
\delta_k)$   can be defined in a more compact way
\begin{equation}
  \mathcal{B}_k(\theta, \delta_k ) = \left\{  n \vert \theta \le
    \theta_{n,k}  \right\} .  \label{eq:defb}
\end{equation}
Note that $\mathcal{B}_k(\theta, \delta_k )$ is constant inside any given
interval $[\theta_{i-1}, \theta_i]$.

First we  quickly find the interval $\Delta$ where the solution lies by doing
a binary search on the set of $\theta_i$. At iteration $m$, denote the
two indices that define the
interval that contains the solution by $\left( i_1^{(m)}, i_3^{(m)}
\right) $ so that
$\Delta^{(m)} = [\theta_{i_1^{(m)} } ,
\theta_{i_3^{(m)} } ]$. Obviously, $i_1^{ (0) } = 1$ and
$i_3^{  (0) } = NK $. Let $i_2^{(m)}$ be the index
of the  middle point $\lfloor \left( i_3^{(m)} -  i_1^{(m)}
\right) / 2 \rfloor$.
If $g_1(\theta_{i_2^{(m)} }) > \hat{P}$, then
$i_1^{(m+1)} = i_2^{(m)}$ and $i_3^{(m+1 )} =
i_3^{(m)}$. If
$g_1(\theta_{i_2^{(m)} } ) < \hat{P} $, then
$i_3^{(m+1)} = i_2^{(m)}$ and $i_1^{(m+1 )} =
i_1^{(m)}$.
The number of function
evaluations is bounded by $\log_2(NK)$ since the size of the interval
is divided by 2 at each iteration and the algorithm converges in a finite
number of iterations. We also need to compute the
$\theta_i$ and sort them.

Once we know that the solution lies in
the interval $[\theta_{i-1}, \theta_i]$, we can compute  the value of the
sets $\mathcal{B}_k(\theta, \delta_k)$ and find the value of $\theta $
directly from~(\ref{eq:theta-maxthr-heur-comp}).



\subsection{Fixed-Point Algorithm}
\label{sec:fixed-point-algor}
We can also find the solution of~(\ref{eq:theta-maxthr-heur-comp}),
which we write $\theta = g_2(\theta)$,
by repeated substitution of the left-hand
side $\theta$  into $g_2$.
We first compute an initial value of $\theta$ by replacing the $c_k$'s and
$\beta_{n,k}$'s by their average values in~(\ref{eq:optp-def1}) and
setting the condition
$\overline{p} = 0$.  Using the current value of $\theta$, we then compute the
  $\mathcal{B}_k(\theta, \delta_k )$s. A new value of $\theta$ can then be found from the right-hand side
  of~(\ref{eq:theta-maxthr-heur-comp}) and we iterate until the new value of $\theta$ is the same
  as the previous one.
%
\begin{figure}
  \centering
  \includegraphics[scale=0.7]{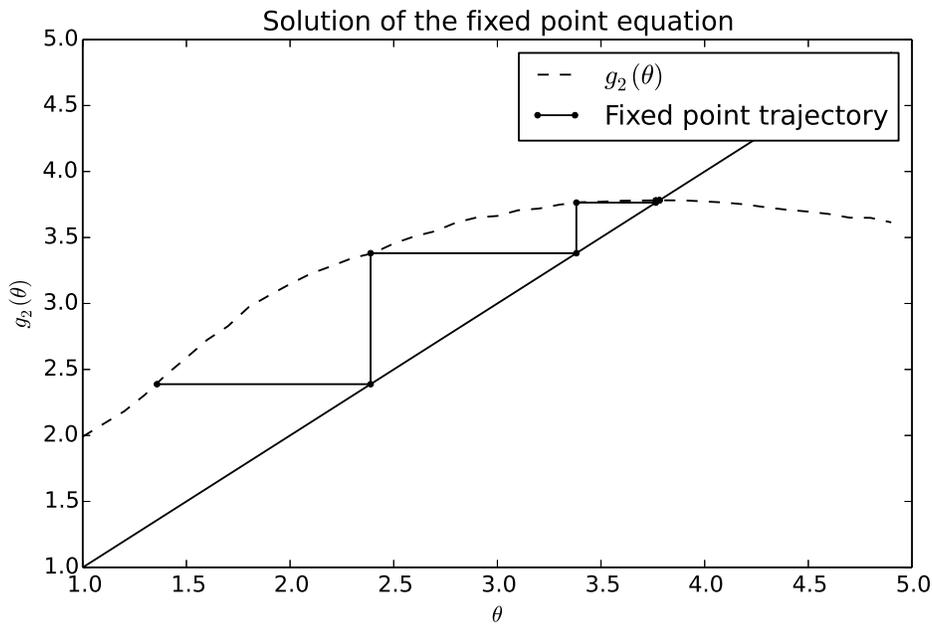}
  \caption{Fixed point iterations}
  \label{fig:fixdpt}
\end{figure}
The stopping rule is an equality since various values of
$\theta$ produce only a finite number of sets $\mathcal{B}_k$ so that
the only values produced by the iterations are the $\theta_i$. We can
see the iteration procedure on Fig.~\ref{fig:fixdpt} where we have
shown the left-hand side in the form $y = \theta$ and the right-hand
side $y = g_2(\theta)$. The solution is at the intersection of the two
curves. We also show the points computed by the substitution
algorithm.

The main advantage of this method is that there is no need to compute
the $\theta_i$ and to sort them in advance.
The problem with substitution methods is that it is
difficult to guarantee convergence. But, in practice we have found that
the fixed-point method always converged for all realistic values  of
the parameters.

We  compare the cpu time of the four solution techniques in
Table~\ref{tab:cpufp}. We have fixed $ N=16$, $\hat{P} = 5$, $c_k =
1$ and generated the $\beta_{n,k}$ from a simple but realistic channel
model.  These values were obtained on a 2.5 GHz Intel
Core i5 computer. We have computed the solution 100 times in each case to have a
reasonably stable estimate. We can see that the fixed-point method is
the fastest of all methods and the number of iterations does not
change much with the problem size.
The option is then between a technique with guaranteed convergence but
somewhat slower than the other technique with no strict convergence guarantee. A compromise could
be to use the fixed-point method and switch to the binary search
technique if it has not converged within a small number of iterations, say 10.

\begin{table}
  \centering
  \begin{tabular}{c c c c c c c c}
    M &  Minos &  \multicolumn{2}{c}{Nonlinear} & \multicolumn{2}{c}{Binary} &  \multicolumn{2}{c}{Fixed Point} \\
       &  Cpu msec & Cpu msec & No iterations & Cpu msec   & No iterations & Cpu msec   & No iterations \\
\hline
 2& ~23.3 & 19.0 & 20  & ~1.5 & 0 & 2.0 & 3  \\
 4& ~71.8 & 43.8 & 21  & 12.6 & 6 & 4.2 & 3  \\
 8& 292.~ & 92.5 & 22  & 31.4 & 9 & 7.6 & 3  \\
  \end{tabular}
  \caption{Computation load without rate constraints}
  \label{tab:cpufp}
\end{table}

\section{Power Allocation  with Rate Constraints }
\label{subsec:rat_con_heur}
The problem of allocating power with rate constraints is solved in two
steps. First, we try to allocate power without the rate constraints
using one of the algorithms of section~\ref{subsec:max_thr_heur} since
this is very fast.
For the case where some of the RT users do not get their minimum rate,
we propose in this section two algorithms that try to find a solution that is feasible
both for the rate and power constraints, even though it may not be optimal.  First we present an
iterative algorithm that is guaranteed to produce a feasible solution with the rate
constraints at their lower bounds. We then propose a non-iterative
procedure that is faster  but which is not  guaranteed to find a feasible solution.
\subsection{Feasible Solution at Bounds}
\label{sec:feasible-solution-at}
\begin{figure}
  \centering
  \includegraphics[scale=0.7]{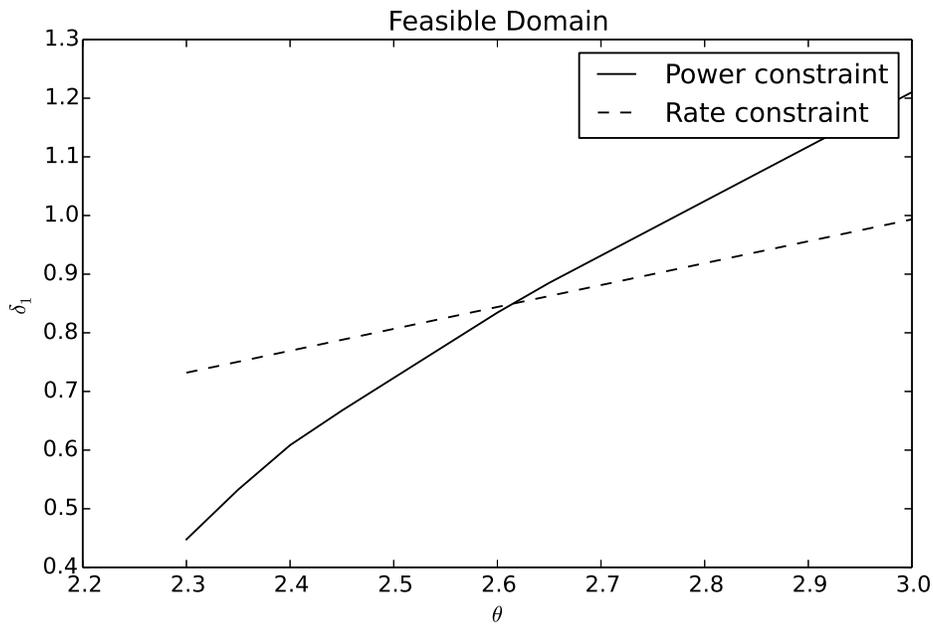}
  \caption{Domain for a single rate constraint}
  \label{fig:domain}
\end{figure}
To simplify the discussion,
consider first a problem with a single rate constraint, say for
user~1. The boundaries of the two
constraints~(\ref{eq:subp-q-pow}--\ref{eq:subp-q-rat}) each
define an implicit function $\delta_1(\theta)$. We can see an example
of these boundaries on Fig.~\ref{fig:domain} for a small network
with $K = 2$, $N=8$, $P=2$ and a single real-time user. The set of feasible
solution  is the
region above the rate constraint and below the power constraint.

A remarkable feature of this plot is that the boundary of the rate
constraint seems to be linear.  We already know that the boundary is
piecewise linear but we now
prove the following lemma.
\begin{lem}
  The boundaries of the rate constraints are linear.
\end{lem}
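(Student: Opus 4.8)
The plan is to show that although equation~(\ref{eq:delta1theta}) involves the apparently $\theta$-dependent set $\mathcal{B}_k(\theta,\delta_k)$, this set is in fact constant along the rate boundary, so that the relation between $\delta_k$ and $\theta$ there is genuinely affine rather than merely piecewise affine. The cleanest route is to reparametrize the water-filling solution~(\ref{eq:optp-def1}) for user $k$ in terms of a single ``water level''
\[
  \nu_k \doteq \frac{c_k+\delta_k}{\theta\ln 2},
\]
so that $\overline{p}_{n,k} = \left[\nu_k/\beta_{n,k} - 1\right]^+$ and the rate achieved on subchannel $n$ is $r_{n,k} = \left[\log_2(\nu_k/\beta_{n,k})\right]^+$. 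The first key point is then that the total rate of user $k$,
\[
  R_k(\nu_k) \doteq \sum_{n=1}^N \left[\log_2(\nu_k/\beta_{n,k})\right]^+,
\]
depends on the pair $(\theta,\delta_k)$ only through the scalar $\nu_k$.

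Next I would record that $R_k$ is continuous, nondecreasing, and strictly increasing for $\nu_k > \min_n \beta_{n,k}$; hence, whenever $\check{d}_k>0$, the equation $R_k(\nu_k)=\check{d}_k$ has a unique root $\nu_k^\star$, and this root depends only on the channel gains $\{\beta_{n,k}\}_n$ and on $\check{d}_k$ --- crucially, not on $\theta$. On the boundary of the rate constraint of user $k$ one has $R_k(\nu_k)=\check{d}_k$, i.e.\ $\nu_k=\nu_k^\star$, which by the definition of $\nu_k$ is
\[
  \frac{c_k+\delta_k}{\theta\ln 2}=\nu_k^\star
  \qquad\Longleftrightarrow\qquad
  \delta_k = (\nu_k^\star\ln 2)\,\theta - c_k .
\]
This is an affine function of $\theta$ --- a hyperplane in the full variable space $(\theta,\boldsymbol\delta)$, since the rate constraint of user $k$ involves no $\delta_j$ with $j\neq k$ --- which is exactly the assertion of the lemma. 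As a consistency check, substituting $\mathcal{B}_k=\{n:\beta_{n,k}<\nu_k^\star\}$ and $\sigma_k=|\mathcal{B}_k|$ into~(\ref{eq:delta1theta}) reproduces this expression on the portion of the boundary where $\delta_k>0$, and the membership test $\beta_{n,k}<\nu_k^\star$ is manifestly $\theta$-free, which re-confirms that $\mathcal{B}_k$ does not change as one moves along the boundary.

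The step I expect to require the most care is the degenerate configuration where $\nu_k^\star$ equals one of the $\beta_{n,k}$, i.e.\ where the boundary passes exactly through a corner point and a subchannel is on the verge of entering $\mathcal{B}_k$: there $R_k$ has a kink, but it is still strictly increasing through the level $\check{d}_k$ (which is automatic since $\check{d}_k>0$ forces $\nu_k^\star>\min_n\beta_{n,k}$), so the root is still unique, and the two adjacent linear pieces of~(\ref{eq:delta1theta}) agree at that point because both evaluate to $\delta_k=(\nu_k^\star\ln 2)\theta-c_k$. I would also briefly dispose of the trivial case $\check{d}_k=0$ (where $\delta_k=0$ on the boundary, still affine), and note that in the regime of interest $\theta$ is large enough that the right-hand side above is nonnegative, so that the affine piece is the one actually realized.
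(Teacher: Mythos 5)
Your proof is correct and rests on the same key observation as the paper's: along the rate boundary the ratio $(c_k+\delta_k)/(\theta\ln 2)$ is pinned to a constant water level determined only by $\check{d}_k$ and the $\beta_{n,k}$, so the active set $\mathcal{B}_k$ is frozen and the relation between $\delta_k$ and $\theta$ is affine; your slope $\nu_k^\star\ln 2$ is exactly the paper's $m_k$ from~(\ref{eq:slopes}). The only difference is the direction of the argument --- the paper posits the linear form and verifies that the resulting $\overline{p}_{n,k}$ are $\theta$-independent, whereas you derive the line from uniqueness of the root of $R_k(\nu_k)=\check{d}_k$ --- a slightly cleaner but not materially different route.
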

\begin{proof}
If $\theta \approx 0$, all the $p_{n,k}  \approx \infty $ and~(\ref{eq:subp-q-rat}) will be a
strict inequality. From the complementarity condition, we will then
have $\delta_k = 0$. This is consistent with the interpretation of
$\theta$ as a penalty for the power constraint: If $\theta$ is small,
there is no penalty for exceeding the power constraint so that there is
a lot of power available.  We can  then choose large values of the
$\overline{p}_{n,k}$
to increase the total  rate so that the rate constraint will be an
inequality.

As $\theta$ increases, there will be a point $\overline{\theta}_k$
where each rate $k$
constraint will become tight. This is given by
\begin{equation}
  \overline{\theta}_k = \frac{c_k \sum_{n \in \mathcal{B}_k(\overline{\theta}_k, 0) }
    \beta_{n,k} } { \check{d}_k \log
    2 } . \label{eq:theta0}
\end{equation}
At that point, we must increase
$\delta_k \ge 0$ to stay on the boundary.
For all values of
$\theta \ge \overline{\theta}_k$, we choose a value of $\delta_k$ which is a
linear function  of $\theta$ with slope
$m_k$
\begin{align}
  \delta_k & = \left[ m_k \theta - c_k \right]^+  \label{eq:deltalin}\\
  m_k & = \ln 2   \left( 2^{\check{d}_k}
    \prod_{n \in \mathcal{B} (\theta, \delta_k) }
    \beta_{n,k} \right)^{ \sigma(\theta, \delta_k)^{-1}
  } \label{eq:slopes} .
\end{align}
With this choice of $\delta_k$, we know that the rate constraint in an
equality over the whole range $\theta \ge \theta_k$ and from this, we see that
\begin{align*}
  \overline{p}_{n,k} & = \left[ \frac{c_k + \delta_k}{\theta \log 2 \beta_{n,k}}
    - 1 \right]^+ \\
  & = \left[ \frac{c_k + m_k \theta - c_k}{\theta \log 2 \beta_{n,k}}
    - 1 \right]^+ \\
  & = \left[ \frac{ m_k }{ \log 2 \beta_{n,k}}     - 1 \right]^+
  \end{align*}
which is independent of $\theta$. In other words, once $\theta \ge
\overline{\theta}_k$, the set $\mathcal{B}_k(\theta, \delta_k)$ does not change
and the rate constraint boundary is linear.
\end{proof}

We see from Fig.~\ref{fig:domain} that the point $(\theta^{(1)}, 0)$ where the power constraint
boundary would meet the $\theta$ axis, is outside the feasible domain.
The problem is to quickly find a point  both above the rate boundary
and below the power boundary and, if possible, a good point.

We make use of the linearity of the boundary to compute a feasible solution
quickly by computing the intersection of the rate and
power boundaries. First, we compute the slopes $m_k$ of the rate boundaries
from~(\ref{eq:slopes}) at the point $(\theta^{(1)}, 0)$. Given any value
of $\theta$, we can then compute all the $\delta_k$'s from~(\ref{eq:deltalin}) and from this,
the value of the total power $P(\theta)$  and total rate $R(\theta)$
from~(\ref{eq:optp-def1}). We can see an example of these curves on
Fig.~\ref{fig:power-rate-bound}.
\begin{figure}
  \centering
  \includegraphics[scale=0.7]{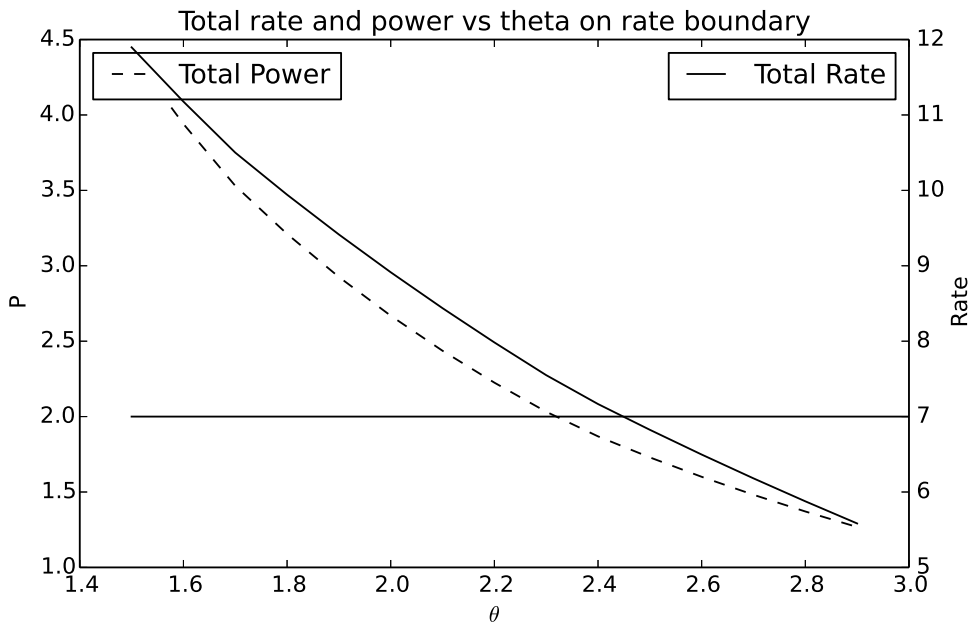}
  \caption{Rate and Power Curves vs $\theta$ on the Rate Boundaries}
  \label{fig:power-rate-bound}
\end{figure}
We compute the solution by solving $P(\theta) = \hat{P}$ by some
numerical root-finding algorithm which is guaranteed to converge
asymptotically. This is the best we can do if we
impose the conditions that the rate constraints are met with equality
since moving to the right on these boundaries means that the power
constraint is not tight, which is a condition for an optimal
solution.

\subsection{Fast Heuristic}
\label{sec:previous-heuristic}
As we will see later, the time needed to compute a solution on the
boundary may turn out to be too large for use in real time. Instead,
we propose  the following simple, non-iterative
algorithm. The algorithm is based on the observation that a standard way
to satisfy some constraint $k$  is to increase the
corresponding multiplier $\delta_k$. From~(\ref{eq:optp-def1}), we see that this will
increase the $\overline{p}_{,nk}$ which will make the constraints more
feasible. But this will also make the power constraint infeasible so
that we need to increase $\theta$ as well. Let us define
\begin{description}
\item[$\theta^{(1)}$] the multiplier from the solution
  without rate constraints
  \item [$\overline{\theta}$] an upper bound on $\theta$ given by
$\overline{\theta} = W \theta^{(1)}$ with $ W >1$.
\end{description}
Starting from $\theta = \theta^{(1)}$, the heuristic basically moves a
fixed amount in the direction of increasing $\theta$. At that point,
it computes the solution on the rate boundaries and then tries to
improve $\theta$ one more time as follows:
\begin{enumerate}
\item Compute the set of unsatisfied users $\mathcal{T}  = \left\{ k
    \suchthat r_k < \check{d}_k \right\} $.
   \item Compute $\mathcal{B}_k(\overline{\theta}, 0)$
\item For users $ k \not \in \mathcal{T}$, set
$\delta_k^{(2)}=0$. For the other users, compute $\delta_k^{(2)}$
from~(\ref{eq:delta1theta}) with $\theta = \overline{\theta} $ and
$\mathcal{B}_k = \mathcal{B}_k(  \overline{\theta},
\delta_k^{(2)}  )$.
%
%
\item Compute a new
  $\theta^{(2)}$ from~(\ref{eq:theta-maxthr-heur-comp}) with
  $\boldsymbol{\delta} = \boldsymbol{\delta}^{(2)}  $ and $\mathcal{B}_k =
  \mathcal{B}(\overline{\theta}, \delta_k^{ (2) } ) $.
\end{enumerate}
\subsection{Improved Heuristic} \label{sec:tighter-bounds}
The single iteration heuristic uses an arbitrary value $W$ for the
step size which may not produce a very good solution.
We can improve the results if we  choose the parameter $W$
based on the difference between the required rate and the
rate achieved by the maximum throughput PA. This increases the
chance that the step will lead to a point to the right of the
intersection of the two constraint curves.
First we compute a
different value of $W$ for each real-time user
\begin{equation}
W_k=2^{\epsilon(\check{d}_k-r_k)}. \label{eq:eps-def}
\end{equation}
and then use the maximum $W_k$ among all RT users
\begin{align}
\label{eq:bar-epsilon} W &= \max_{ k \in \mathcal{T} } W_k  \\
 \overline{\theta} &= W \theta^{(1)} .
\end{align}
This way, we can increase the emphasis we want to put on
meeting the rate constraints by increasing the value of $\epsilon$.
\subsection{Infeasibility}
\label{sec:infeasibility}
\begin{figure}
  \centering
  \includegraphics[scale=0.7]{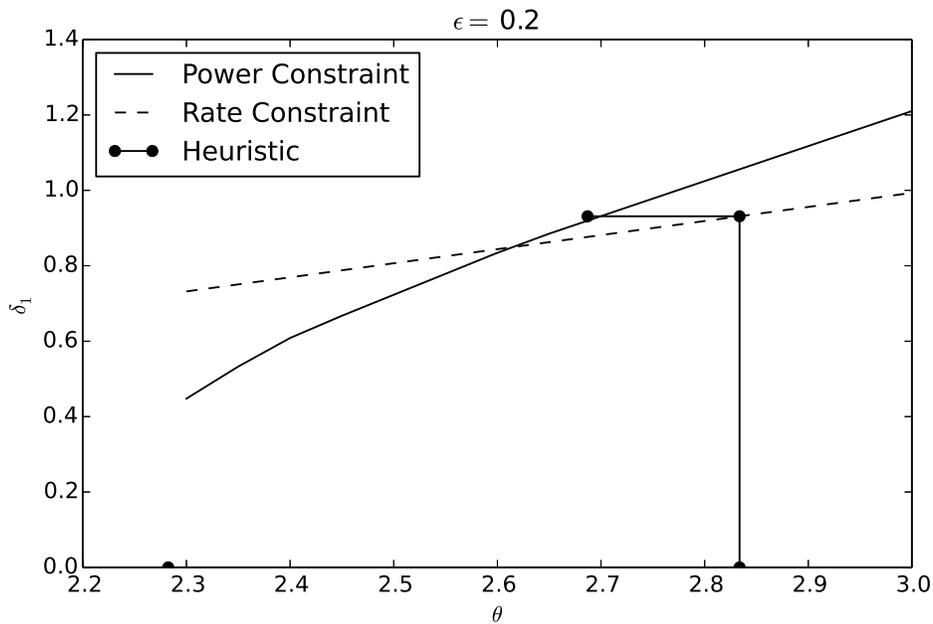}
  \caption{Fast Heuristic, Feasible Solution}
  \label{fig:heuristicfeasible}
\end{figure}
We can see on Fig.~\ref{fig:heuristicfeasible} the path of the heuristic
in the $(\theta, \delta)$ space with $ \epsilon = 0.2 $. In this case, the final point is
feasible since it lies above the rate boundary and lies on the power
boundary. We say that the algorithm has succeeded. On the other hand,
it is possible for the algorithm not to succeed for a variety of
reasons.
A practical approach could then be to try the fast heuristic first. If
it fails, one could either try the zero-finding solution with a
limited number of iterations or simply use the infeasible solution.

\begin{figure}
  \centering
  \includegraphics[scale=0.7]{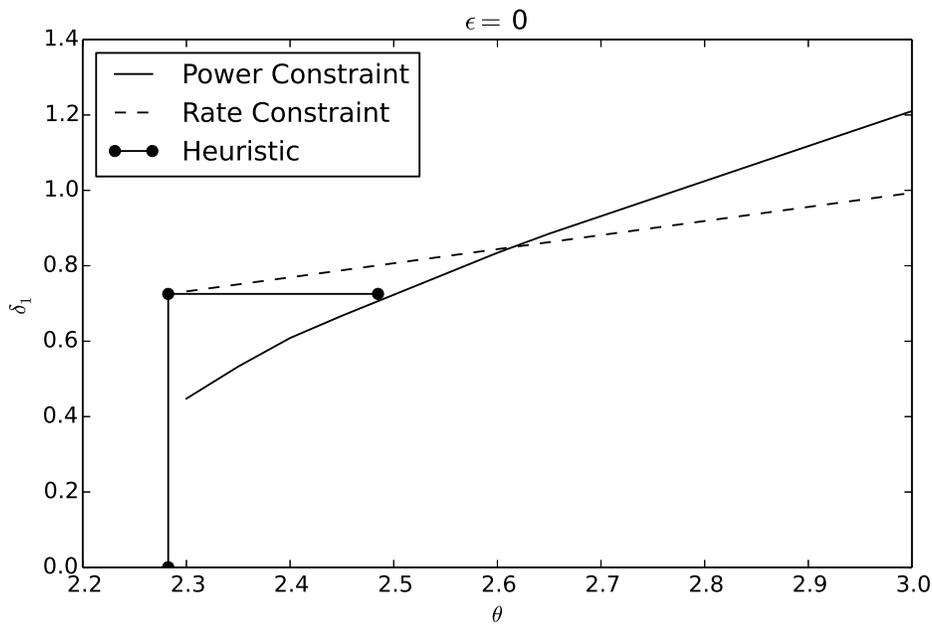}
  \caption{Fast Heuristic, Infeasible Solution}
  \label{fig:heuristicinfeasible}
\end{figure}

The first one is that we don't know where the intersection of the two boundaries lie
so that we don't know how far to the right we must go to. Choosing a value
of $ \epsilon $ too small will not go far enough and the  final point will not
be feasible. This can be seen on Fig.~\ref{fig:heuristicinfeasible}
where we have chosen $ \epsilon = 0 $. The point $(\overline{\theta}, 0)$ is
to the right left  of the intersection point of the two boundary
curves so that moving to the right until we reach the power boundary
still leaves us below the rate constraint.  In this case, the
algorithm has failed.

There is another, more subtle reason, why the algorithm may
fail. Consider the final point of
figure~\ref{fig:heuristicfeasible}. We can see that the point is
slightly to the left of the power boundary.
The reason is that we compute
$\theta^{(2)}$ using $\mathcal{B}_k(\overline{\theta},
\delta_k^{ (2) } )$ which is not the same as $\mathcal{B}_k(
\theta^{(2)}, \delta_k^{ (2) } ) $, hence the small error.

Another potential problem is that by increasing $\theta$ from
$\theta^{(1)}$,
the heuristic algorithm reduces the power of the users that are
not in $\mathcal{T}$. If some of these are RT users and have a rate
close to $\check{d}_k$, they may become unfeasible.
A simple solution is to add these users to $\mathcal{T}$ by increasing their
$\check{d}_k$ and run the fast heuristic again.

%

We can see that the algorithm
produces a feasible point as follows.  We assume that users close to the
unfeasibility boundary are all included in set $\mathcal{T}$ by
increasing their $\check{d}_k$.
\begin{lem}
\label{theo:heur-feas} When successful, the fast algorithm
produces a feasible point to
problem~(\ref{eq:sub-problem-q}--\ref{eq:subp-q-rat}).
\end{lem}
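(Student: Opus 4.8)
The plan is to verify, at the point $(\theta^{(2)},\boldsymbol{\delta}^{(2)})$ returned by the heuristic together with the power allocation~(\ref{eq:optp-def1}) it induces, each of the constraints of problem~(\ref{eq:sub-problem-q}--\ref{eq:subp-q-rat}). The non-negativity requirements~(\ref{eq:subp-box}) are immediate, because every $\overline{p}_{n,k}$ in~(\ref{eq:optp-def1}) is a $[\,\cdot\,]^{+}$ expression, so the real content is checking the rate constraints~(\ref{eq:subp-q-rat}) and the total power constraint~(\ref{eq:subp-q-pow}).

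For the rate constraints I would first record the elementary monotonicity that follows directly from~(\ref{eq:optp-def1})--(\ref{eq:rate-computation}): for each user $k$ the achieved rate $\sum_{n} r_{n,k}$ is non-increasing in $\theta$ and non-decreasing in $\delta_k$, since each $\overline{p}_{n,k}$ has these monotonicities and $\log_2(1+\cdot)$ preserves them. Then I split the real-time users. For $k\in\mathcal{T}$, step~3 selects $\delta_k^{(2)}$ so that~(\ref{eq:delta1theta}) holds with equality at $\theta=\overline{\theta}$, i.e. $(\overline{\theta},\delta_k^{(2)})$ lies exactly on the rate boundary of user~$k$; since $\overline{\theta}>\theta^{(1)}>\overline{\theta}_k$ for such a user, the preceding lemma applies and that boundary is the linear curve~(\ref{eq:deltalin}) with a fixed $\mathcal{B}_k$, so the ``success'' hypothesis --- that the final point lies on or above every rate boundary --- combined with the monotonicity above gives $\sum_{n} r_{n,k}\ge\check{d}_k$. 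For $k\in\mathcal{D}\setminus\mathcal{T}$ we have $\delta_k^{(2)}=0$ and the constraint was already slack at $(\theta^{(1)},0)$; here I would invoke the standing assumption that every real-time user whose rate is close to its bound has been absorbed into $\mathcal{T}$ (by inflating its $\check{d}_k$), so the rate of such a $k$ remains above $\check{d}_k$ under the move from $\theta^{(1)}$ to $\theta^{(2)}$.

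For the power constraint the key point is that~(\ref{eq:theta-maxthr-heur-comp}) is exactly~(\ref{eq:subp-q-pow}) written with equality after substituting~(\ref{eq:optp-def1}): feeding the cells $\mathcal{B}_k(\overline{\theta},\delta_k^{(2)})$ and the multipliers $(\theta^{(2)},\boldsymbol{\delta}^{(2)})$ back into the left-hand side of~(\ref{eq:subp-q-pow}) returns precisely $\check{P}$, by the very definition of $\theta^{(2)}$ in step~4. Hence the output sits on the power boundary and satisfies~(\ref{eq:subp-q-pow}). Combining this with the rate checks above and with non-negativity yields feasibility.

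The main obstacle is this last step, and specifically the set mismatch flagged in Section~\ref{sec:infeasibility}: the heuristic evaluates~(\ref{eq:theta-maxthr-heur-comp}) using the cells $\mathcal{B}_k(\overline{\theta},\delta_k^{(2)})$, whereas the allocation~(\ref{eq:optp-def1}) at $\theta^{(2)}$ actually has support $\mathcal{B}_k(\theta^{(2)},\delta_k^{(2)})$, so unless $\theta^{(2)}$ and $\overline{\theta}$ lie in the same cell of the corner-point partition the identity $\sum_{n,k}\beta_{n,k}\overline{p}_{n,k}=\check{P}$ holds only approximately. I would handle this either by stating the conclusion for the allocation built with the fixed cells $\mathcal{B}_k(\overline{\theta},\delta_k^{(2)})$, for which the power equation is exact, or by arguing that the discrepancy is small and folding it into the meaning of ``success'', as the surrounding text does. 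A secondary nuisance is that, to treat the $k\in\mathcal{T}$ users without simply quoting the ``success'' hypothesis, one needs to control the sign of $\theta^{(2)}-\overline{\theta}$, which is not obvious from~(\ref{eq:theta-maxthr-heur-comp}) alone.
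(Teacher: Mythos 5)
Your proposal is correct and follows essentially the same route as the paper's own proof: the final point lies on the power boundary by the construction of $\theta^{(2)}$ in step~4, the point $(\overline{\theta},\delta_k^{(2)})$ lies on the rate boundary by step~3, and moving left to $\theta^{(2)}\le\overline{\theta}$ at fixed $\delta_k^{(2)}$ keeps each rate constraint satisfied (you phrase this via monotonicity of the rate in $\theta$, the paper via the positive slope of the boundary --- the same fact). The two caveats you flag, the $\mathcal{B}_k(\overline{\theta},\cdot)$ versus $\mathcal{B}_k(\theta^{(2)},\cdot)$ mismatch and the unverified sign of $\theta^{(2)}-\overline{\theta}$, are exactly the issues the paper itself absorbs into the qualifier ``when successful'' and discusses in Section~\ref{sec:infeasibility}, so your treatment is, if anything, more explicit than the original.
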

\begin{proof}
By construction, the final point lies on the power boundary. It also
lies above the rate boundary  for the
following reason. The point $(\overline{\theta},
\delta_k^{ (2) } ) $ lies on the rate boundary by construction. Given that the slope of
the rate boundary is positive, all the poinst $(\theta,
\delta_k^{ (2) } ) $ with $\theta < \overline{\theta}$ will also
lie above the boundary.
\end{proof}
Note also that
this method can be used with an arbitrary method for computing the
unconstrained power allocation, such as~\cite{ling12,yu-boyd-04,jang-03}, as
long as we get the value of $\theta^{(1)}$
from the solution.
\section{Numerical Evaluation}
\label{sec:results}
In this section, we compare the three different
methods to solve
problem~(\ref{eq:sub-problem-q}--\ref{eq:subp-box}): An exact solution
by a nonlinear solver~\cite{minos93}, the boundary solution of~\ref{sec:feasible-solution-at} and the fast
heuristic of~\ref{sec:tighter-bounds}. We look at their accuracy and
the cpu time needed to compute a solution.  The nonlinear equations
were solved by the netlib function \verb|bisect| through the Scipy
Python interface. All results were obtained on an Intel Core i5 cpu
running at 2.5 GHz.
\subsection{Generating Test Cases}
\label{sec:gener-test-cases}
Each test case is defined first by the set $(K,N,M,P)$.
We use a Rayleigh fading model to
generate the user channels such that each component of the channel
vectors $h_{k,n}$ are i.i.d. random variables distributed as
$\mathcal{CN}(0, 1)$. We also assume independent fading between
users, antennas and subcarriers.  We assign the subchannels with
a variant of the semiorthogonal
user selection (SUS)  heuristic~\cite{yoo06} and then compute the $\beta_{n,k}$ from
this.
After this, we select the number of real-time users $R$. The
set of values we have used is shown in Table~\ref{tab:paramtests}.
\begin{table}
  \centering
  \begin{tabular}{cccccc}
    Number & $K$ & $N$ & $M$ & $\hat{P} $ & $R$ \\
    \hline
    \refstepcounter{Problemset}\label{pr2025253} \ref{pr2025253}& 20 & 25 & 2 & 5 & 3 \\
    \refstepcounter{Problemset}\label{pr20252510} \ref{pr20252510}& 20 & 25 & 2 & 5 & 10 \\
    \refstepcounter{Problemset}\label{pr8025850} \ref{pr8025850}& 80 & 25 & 2 & 5 & 0 \\
    \refstepcounter{Problemset}\label{pr80258510} \ref{pr80258510}& 80 & 25 & 2 & 5 & 10 \\
    \refstepcounter{Problemset}\label{pr80258520} \ref{pr80258520}& 80 & 25 & 2 & 5 & 20 \\
    \refstepcounter{Problemset}\label{pr80258530} \ref{pr80258530}& 80 & 25 & 2 & 5 & 30 \\
    \refstepcounter{Problemset}\label{pr80258540} \ref{pr80258540}& 80 & 25 & 2 & 5 & 40 \\
    \refstepcounter{Problemset}\label{pr80258550} \ref{pr80258550}& 80 & 25 & 2 & 5 & 50 \\
    \refstepcounter{Problemset}\label{pr80258560} \ref{pr80258560}& 80 & 25 & 2 & 5 & 60 \\
    \hline
  \end{tabular}
  \caption{Parameters for Test Cases}
  \label{tab:paramtests}
\end{table}

The next step is to generate the rate bounds in such a way
that the problems are known to be feasible. Also, we
would like to control the ``difficulty'' of the problem to see how
much this impacts on the accuracy and cpu time. First, we maximize the
total rate of real-time users by solving
problem~(\ref{eq:sub-problem-q}--\ref{eq:subp-q-pow}) over the
$p_{n,k},\, k \in \mathcal{D} $ only and  without the rate
constraints. Call this problem $P_0$ and $R_0$ the total rate of the
real-time users.
We then know there is enough power to give the
real-time users the rates $r_k^{ (0) } $ produced by the unconstrained
solution.

This means that any  problem $P_1$ with rate
constraints $\check{d}_k = r_k^{ (0) } $ has a
feasible solution  with only
the real-time users having a positive rate and with the rate
constraints at their bounds.    If this were not the case, there would
be some spare power that could be used to construct a solution to $P_0$ with a total rate $R_1 >
R_0$ which contradicts the fact that $R_0$ is the optimal value for $P_0$.

We also know that any
problem with rate constraints $\check{d}_k = s  r_k^{ (0) } $ is
feasible if $ 0 \le s \le 1$.  Furthermore, the value of $s$ is a
rough measure of the ``tightness'' of the constrained data set. The
problem is unconstrained when $s = 0$ and is tightly constrained when
$s = 1$ where only the real-time users can get some power.  The scale $s$
will be the primary parameter used to display results.
\subsection{Effect of $\epsilon$}
\label{sec:effect-epsilon}
\begin{figure}
  \centering
  \includegraphics[scale=0.7]{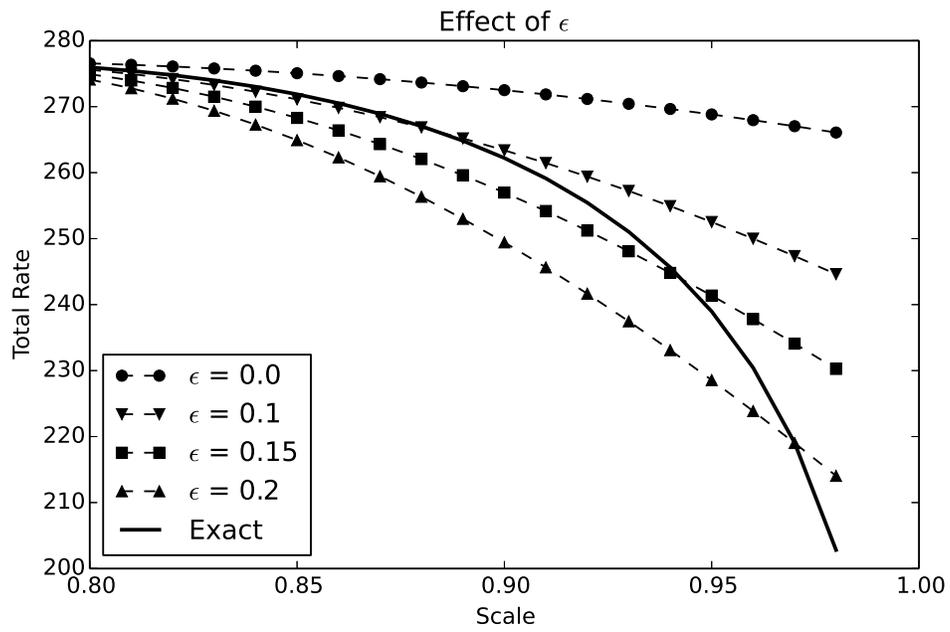}
  \caption{Effect of $\epsilon$ on Heuristic}
  \label{fig:heurvseps}
\end{figure}
First we examine the effect of $\epsilon$ on the accuracy of the fast
heuristic for problem~\ref{pr2025253}. This is show on Fig.~\ref{fig:heurvseps} where we plot the objective
value computed by the heuristic for different values of $\epsilon$ and
also the optimal value computed by the nonlinear solver as a function
of the scale parameter. For each curve corresponding to the heuristic,
we can define a range of values of $s$ where the heuristic value is
below the optimal rate and one where it is above. In this latter case,
this is an indication that the solution computed by the heuristic is
not feasible. The curves show that there is a clear tradeoff between
the accuracy of the solution and the range of problems where the
solution is feasible. For small values of $\epsilon$, the range of
problems where the heuristic can produce a feasible solution is quite
small but the solutions are quite accurate. This is opposite to the
curves with large $\epsilon$ where the algorithm can compute feasible
solutions over a larger range of problems but where the objective function solution is
not as good.

\subsection{Solution Accuracy}
\label{sec:accuracy-solutions}
\begin{figure}
  \centering
  \includegraphics[scale=0.7]{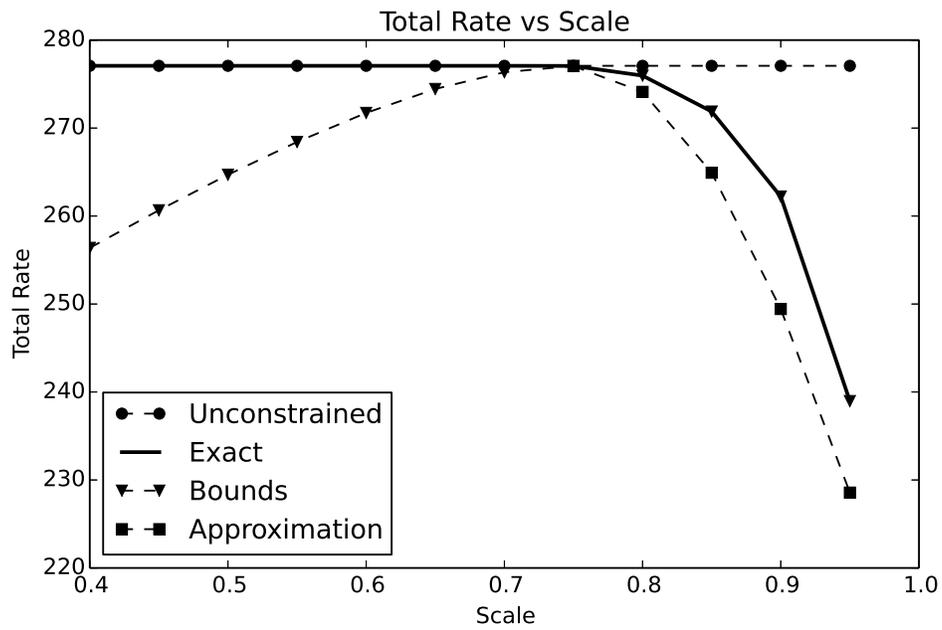}
  \caption{Accuracy of Heuristics, Small Network, Few Real-Time Users}
  \label{fig:objvsscale-20-P1}
\end{figure}
Next we show the accuracy of the two heuristics by comparing the total
rate they produce with the exact value as a function of the scale
parameter. In all cases, we choose $\epsilon = 0.2$. The first case is
for problem~\ref{pr2025253} shown on Fig.~\ref{fig:objvsscale-20-P1}. Here, it
turns out that all the rate constraints are at their bound in
the optimal solution. This is not unexpected if we have a small number
of real-time users: There is a good chance that other users may have a
better channel so that once the constraints is satisfied, it is more
useful to allocate the power to these non real-time users since they
will get a better rate.

\begin{figure}
  \centering
  \includegraphics[scale=0.7]{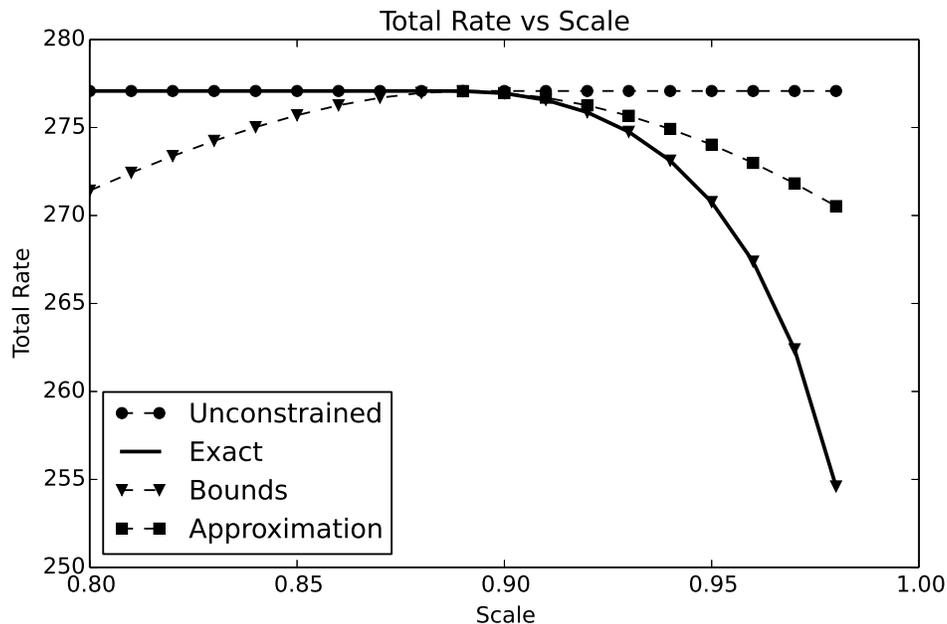}
  \caption{Accuracy of Heuristics, 10 Real-time Users}
  \label{fig:objvsscale-20-P2}
\end{figure}
A similar result is shown for problem~\ref{pr20252510}  on
Fig.~\ref{fig:objvsscale-20-P2}. This has the same parameters as
problem~\ref{pr2025253}  but
this time with 10 real-time users.  Here too we see that the rate
constraints are at their bound in the optimal solution.  The
approximate heuristic does a rather poor job of finding a feasible
solution for the value of $\epsilon = 0.2$ that we have chosen.

The rate bounds in the example of Fig.~\ref{fig:objvsscale-20-P2}
are computed from the problem data and in a sense they ``fit'' the
values of $\beta$. In practice, the rate requirement of real-time
users would be determined by the application independently of the
channel conditions. To see the accuracy of the solution techniques on
these cases, we have produced the bounds somewhat differently. First,
we assume that there is a small number of applications, three in the
present case, with different rate bounds $\check{d}_k^{(0)}$  in the ratio 1:4:16. For
example, this
could correspond to voice, fast data transfer and video. We then
compute the largest scaling factor $\gamma$ such that the problem is
feasible with bounds $\check{d}_k = \gamma \check{d}_k^{(0)}$.  We can then  control how difficult the problem is by
selecting some scale between 0 and 1 to scale the bounds $\check{d}_k$ as above.

\begin{figure}
  \centering
  \includegraphics[scale=0.7]{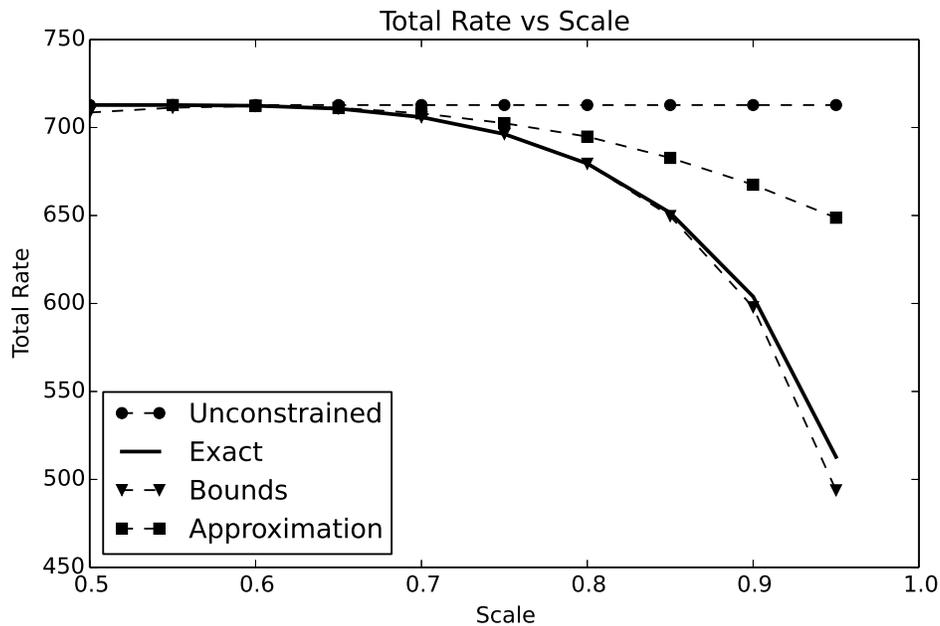}
  \caption{Accuracy for Fixed Rate Bounds, Problem~\ref{pr80258510}}
  \label{fig:accuracy10}
\end{figure}
We show on Fig.~\ref{fig:accuracy10} the accuracy of the algorithm
for problem~\ref{pr80258510} with the given rate bounds. We can see
that the solution on the boundary is still very close to the optimal
solution and the fast heuristic produces an infeasible solution. Note
however that this case has only 10 real-time users out of 80.

\begin{figure}
  \centering
  \includegraphics[scale=0.7]{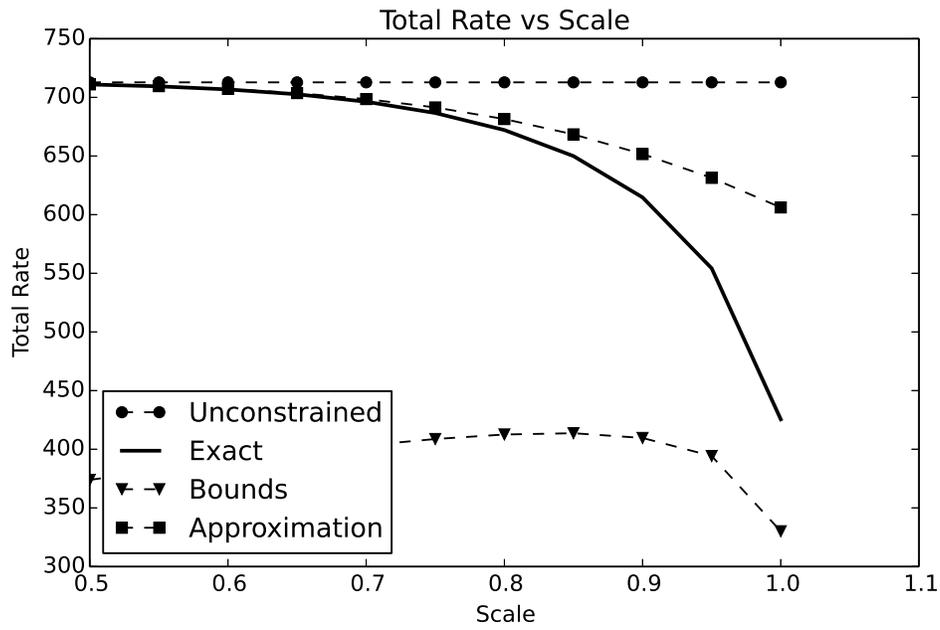}
  \caption{Accuracy for Fixed Rate Bounds, Problem~\ref{pr80258560}}
  \label{fig:accuracy60}
\end{figure}
The situation is quite different if we have 60 real-time users as in
problem~\ref{pr80258560}. These users are divided into 3 groups of 20
and all  users in a group have  the same requirement in the order 1:4:16. We can see from
Fig.~\ref{fig:accuracy60} that the solution on the rate bounds is much lower than the optimal value
for quite a large range of scales and the fast heuristic again produces
infeasible solutions.

\begin{figure}
  \centering
  \includegraphics[scale=0.7]{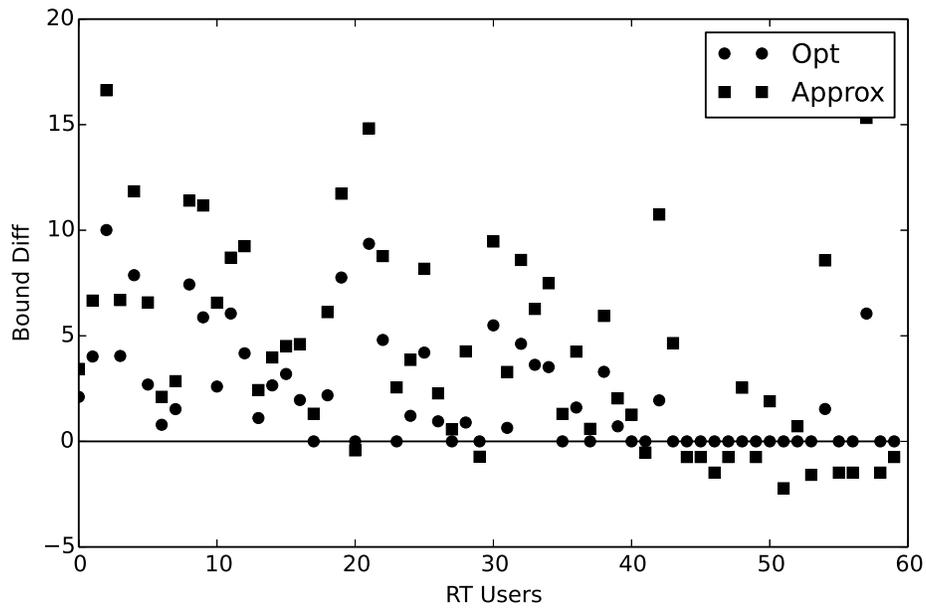}
  \caption{Difference with Rate Bounds}
  \label{fig:diff60}
\end{figure}
We can get a better understanding of these
results from Fig.~\ref{fig:diff60} where we plot, for each real-time
user and  scale 1, the difference $r_k - \check{d}_k$ for the optimal solution and
the approximation. As expected, in the optimal solution, all users are
above zero, since the solution has to be feasible. More important is
the fact that for the first two groups, the ones with the lower
bounds, most users actually get \emph{more}
that their minimum rate. This explains the poor behavior of the bounds
algorithms which computes an optimal solution on the boundary. Also
interesting is the approximate solution for the third group, the one
with high rate. We see that the approximate solution is infeasible but
that the amount of infeasibility is rather small. Note that this
is a relatively hard problem in the sense that the problem is barely
feasible in the first place. Given that the solution produced by the
bounds algorithm is not very good, a reasonable solution would then be
to use the approximation and give some real-time users slightly less
than what they require. This may be acceptable  considering that the channel conditions will
change at the next time frame and that it may then be possible to give
these users the rate that they require.

\subsection{CPU  Time}
\begin{figure}
  \centering
  \includegraphics[scale=0.7]{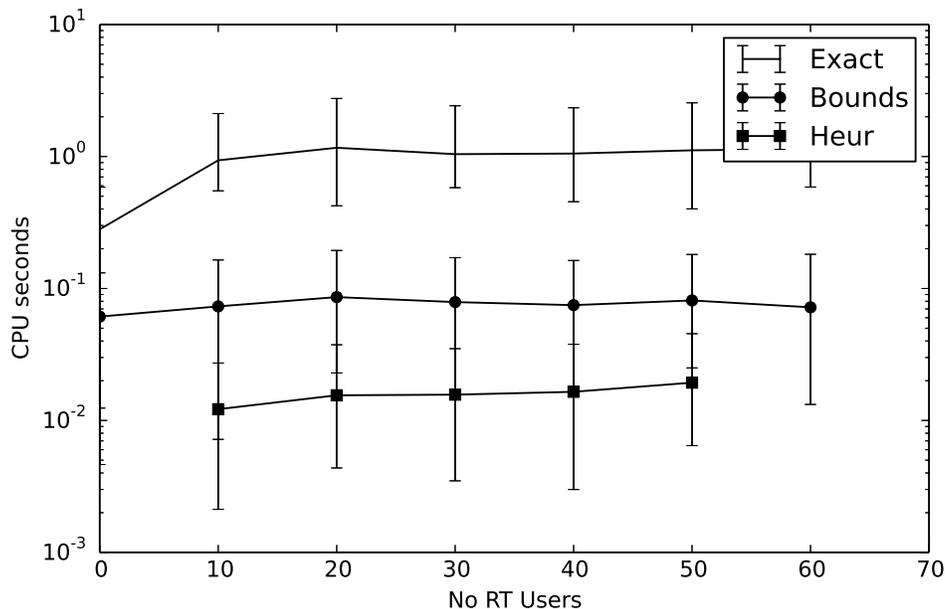}
  \caption{CPU Time vs No of RT Users}
  \label{fig:cpucompare}
\end{figure}
We present in Fig.~\ref{fig:cpucompare} the cpu time required  by the nonlinear solver,
labelled \verb|exact|, the algorithm on the rate boundary, labelled
\verb|bounds| and the heuristic, labelled \verb|heur|. The values of
the computation times are plotted as a function of the number of
real-time users $R$ for problems~\ref{pr8025850}--\ref{pr80258560}. For each value of $R$, we have measured the
average, maximum and minimum cpu time required for solving the problem
over the whole range of values of $s$. The maximum and minimum values
are presented as error bars.  We can see that there is about one order
of magnitude between each algorithm.  The heuristic takes in the order
of 10~ms while the bounds algorithm, about 10 times as much.

%
%

\section{Conclusions} \label{sec:conclusions}
We have looked at the problem of allocating power to a given set of
users in a ZF OFDMA-SDMA system with some real-time users with minimum transmission rate
requirements.  The approach is to solve the problem
in two steps. First, we solve without taking into account the rate
constraints. For this, we provide two fast algorithms. The first one is based on
the solution of the first-order optimality condition via a
zero-finding algorithm. We use the structure of the problem to perform a
search over a finite number of intervals for the one containing the
solution. From this, we get the exact value of the solution so that
the algorithm is guaranteed to find a solution in $O(\log_2 N K)$
steps.

The second algorithm is based on re-writing the optimality conditions
as a fixed-point equation which we can solve by repeated
substitution. This is faster than  the zero-finding method but there
is no guarantee that it will converge. We have found that in practice,
convergence always occurs in a small number of iterations for all the
cases we have tested.

If the solution is feasible, this is the optimal solution for the
constrained problem as well. If some real-time users do not get their
minimum rate, we propose two approximate techniques  to compute a
feasible, or nearly feasible solution by adjusting the dual
multipliers. First, we show that the boundary of the rate constraints
is linear in the $(\theta, \delta)$ plane. We then use this to propose
an algorithm to compute a  solution on the rate boundaries. This is
based on finding a zero of the first-order equations  for the value of
$\theta$ only, since we can use the linearity of the rate boundary to
keep the solution feasible.

The second approximation is not iterative and is a one-step adjustment
of the multipliers. It is parameterized in such a way that we can
emphasize either feasibility, at the cost of having a somewhat less
than optimal value for the rate, or efficiency, where the rate is
relatively large but some real-time users get  somewhat less than
their required rate.

Finally, we present some results for the cpu time needed by three
algorithms: the exact solution by a nonlinear solver, the bounds
algorithm and the approximation. We find that there is roughly one
order of magnitude between each, where the bounds algorithm  is an
order of magnitude faster than the exact solution, and the fast
heuristic is an additional order of magnitude faster.

\bibliographystyle{ieeetran}
\bibliography{journal_paper}

\begin{thebibliography}{10}
\providecommand{\url}[1]{#1}
\csname url@samestyle\endcsname
\providecommand{\newblock}{\relax}
\providecommand{\bibinfo}[2]{#2}
\providecommand{\BIBentrySTDinterwordspacing}{\spaceskip=0pt\relax}
\providecommand{\BIBentryALTinterwordstretchfactor}{4}
\providecommand{\BIBentryALTinterwordspacing}{\spaceskip=\fontdimen2\font plus
\BIBentryALTinterwordstretchfactor\fontdimen3\font minus
  \fontdimen4\font\relax}
\providecommand{\BIBforeignlanguage}[2]{{%
\expandafter\ifx\csname l@#1\endcsname\relax
\typeout{** WARNING: IEEEtran.bst: No hyphenation pattern has been}%
\typeout{** loaded for the language `#1'. Using the pattern for}%
\typeout{** the default language instead.}%
\else
\language=\csname l@#1\endcsname
\fi
#2}}
\providecommand{\BIBdecl}{\relax}
\BIBdecl

\bibitem{wiesel07}
A.~Wiesel, Y.~Eldar, and S.~Shamai, ``{O}ptimal generalized inverses for zero
  forcing precoding,'' in \emph{Proc. of 41st Annual Conf. on Information
  Sciences and Systems}, 2007.

\bibitem{yoo06}
T.~Yoo and A.~Goldsmith, ``{O}n the optimality of multiantenna broadcast
  scheduling using zero-forcing beamforming,'' \emph{IEEE Journal on Selected
  Areas in Communications}, vol.~24, no.~3, pp. 528--541, Mar. 2006.

\bibitem{wang-11}
X.~Wang and G.~Giannakis, ``{R}esource allocation for wireless multiuser {OFDM}
  networks,'' \emph{IEEE Trans. in Information Theory}, vol.~57, no.~7, pp.
  4359--4372, 2011.

\bibitem{palomar-03}
D.~Perez-Palomar and M.~Lagunas, ``{J}oint transmit-receive space-time
  equalization in spatially correlated {MIMO} channels: a beamforming
  approach,'' \emph{IEEE Journal on Selected Areas in Communications}, vol.~21,
  no.~5, pp. 730--743, 2003.

\bibitem{ling12}
X.~Ling, B.~Wu, P.-H. Ho, F.~Luo, and L.~Pan, ``{F}ast water-filling for agile
  power allocation in multi-channel wireless communications,'' \emph{IEEE
  Communications Letters}, vol.~16, no.~8, pp. 1212--1215, 2012.

\bibitem{jang-03}
J.~Jang and a.~Y.~L. K.~B.~Lee, ``{T}ransmit power and bit allocations for
  {OFDM} systems in a fading channel,'' in \emph{Proc. IEEE Global
  Communication Conference, Globecom}, vol.~2, 2003, pp. 858--862.

\bibitem{dimic05}
G.~Dimic and N.~Sidiropoulos, ``{O}n downlink beamforming with greedy user
  selection: performance analysis and a simple new algorithm,'' \emph{IEEE
  Transactions on Signal Processing}, vol.~53, no.~10, pp. 3857--3868, 2005.

\bibitem{perea10}
D.~Perea-Vega, J.~Frigon, and A.~Girard, ``{N}ear-optimal and efficient
  heuristic algorithms for resource allocation in {MISO}-{OFDM} systems,'' in
  \emph{IEEE International Conference on Communications ICC}, May 2010, pp.
  1--6.

\bibitem{papoutsis11}
V.~Papoutsis and S.~Kotsopoulos, ``{R}esource {A}llocation {A}lgorithm for
  {MISO-OFDMA} {S}ystems with {Q}o{S} {P}rovisioning,'' in \emph{Proc. ICWMC,
  The Seventh International Conference on Wireless and Mobile Communications},
  Jun. 2011.

\bibitem{perea-heur13}
D.~Perea-Vega, J.~Frigon, and A.~Girard, ``{E}fficient {H}euristic for
  {R}esource {A}llocation in {Z}ero-forcing {OFDMA-SDMA S}ystems with {M}inimum
  {R}ate {C}onstraints,'' in \emph{--- In preparation, available on-line at
  ...}, 2013.

\bibitem{bertsekas03}
D.~Bertsekas, \emph{Convex Analysis and Optimization}.\hskip 1em plus 0.5em
  minus 0.4em\relax Athena Scientific -- Belmont, MA, 2003.

\bibitem{lte-rel10}
3GPP, ``3{GPP}: {E}volved {U}niversal {T}errestrial {R}adio {A}ccess
  ({E-UTRA}); {B}ase {S}tation radio transmission and reception; {F}urther
  advancements for {E-UTRA} physical layer aspects,'' 3{GPP} {TR} {V}10.5,
  Tech. Spec.n Group Radio Access Network.

\bibitem{jang03}
J.~Jang and K.~Lee, ``{T}ransmit power adaptation for multiuser {OFDM}
  systems,'' \emph{IEEE Journal on selected areas in Communications}, vol.~21,
  no.~2, pp. 171--178, Feb. 2003.

\bibitem{anas04}
M.~Anas, K.~Kim, S.~Shin, and K.~Kim, ``{Q}o{S} {A}ware power allocation for
  combined guaranteed performance and best effort users in {OFDMA} systems,''
  in \emph{Proc. of International Symposium on Intelligent Signal Processing
  and Communication Systems}, 2004, pp. 477--481.

\bibitem{yu-boyd-04}
W.~Yu, W.~Rhee, S.~Boyd, and J.~Cioffi, ``{I}terative water-filling for
  gaussian vector multiple-access channels,'' \emph{IEEE Transactions on
  Information Theory}, vol.~50, no.~1, pp. 145--152, 2004.

\bibitem{zhu09}
S.~Zhu, G.~Lv, and H.~Hui, ``A low complexity heuristic adaptive resource
  llocation algorithm for multiuser {OFDM} under rate constraints,'' in
  \emph{Communications and Networking in China, 2009. ChinaCOM 2009. Fourth
  International Conference on}, Aug 2009, pp. 1--4.

\bibitem{kivanc03}
D.~Kivanc, G.~Li, and H.~Liu, ``Computationally efficient bandwidth allocation
  and power control for {OFDMA},'' \emph{Wireless Communications, IEEE
  Transactions on}, vol.~2, no.~6, pp. 1150--1158, Nov 2003.

\bibitem{boyd04}
L.~V. S.~Boyd, \emph{Convex Optimization}.\hskip 1em plus 0.5em minus
  0.4em\relax Cambridge University Press, 2004.

\bibitem{tao-08}
M.~Tao, Y.-C. Liang, and F.~Zhang, ``{R}esource allocation for delay
  differentiated traffic in multiuser {OFDM} systems,'' \emph{IEEE Trans. on
  Wireless Communications}, vol.~7, no.~6, pp. 2190--2201, 2008.

\bibitem{minos93}
B.~A. Murtaugh and M.~A. Saunders, ``{MINOS} 5.4 users guide,'' Department of
  Operations Research, Stanford University, Stanford, CA 94305 USA, Tech. Rep.
  SOL 83-20R, Dec. 1993.

\end{thebibliography}
%

\end{document}